\documentclass{article}
\usepackage{graphicx} 
\usepackage[margin=1in]{geometry}
\usepackage{amsmath, amsthm, amssymb}
\usepackage{hyperref}
\usepackage{multicol}
\usepackage[title]{appendix}
\hypersetup{colorlinks=true, pdfstartview=FitV, linkcolor=blue, citecolor=blue, urlcolor=blue}
\usepackage{times}
\usepackage{comment}
\usepackage[usenames,dvipsnames]{color}
\usepackage{bbm}
\usepackage{subcaption}
\usepackage{mathtools}
\usepackage{pdfpages}
\usepackage{bm}
\usepackage{verbatim}
\usepackage{mathabx}
\usepackage{geometry}
\usepackage{enumerate}

\numberwithin{equation}{section}

\newtheorem{thm}{Theorem}[section]

 \newtheorem{Rmk}[thm]{Remark}
 \newtheorem{assumptions}[thm]{Assumptions}
 \newtheorem{question}[thm]{Problem}
 \newtheorem{Conj}[thm]{Conjecture}
\newcommand{\ba}{\begin{aligned}}
\newcommand{\ea}{\end{aligned}}

\title{Time Averages for the Vortex Model\\ and Stroboscopic Ergodic Averages}
\author{Emanuele Caglioti and Marco Cecchini \\
        \small Dipartimento di Matematica Guido Castelnuovo\\ \small Sapienza Università di Roma}
\date{}
\begin{document}
\maketitle
\begin{abstract}
We consider the vortex model on the plane, focusing on the case of vortices with the same sign and, for simplicity, assuming all vortices possess equal circulation. In particular we are interested at the time average of the vorticity density, i.e. the empirical measure associated to the vortices.
 
 We conjecture that, for a.e. initial data, the time average of the empirical density is radial.
 
 We prove the result for $N=3$ vortices by exploiting the integrability of the system.
 
 For $N > 3$ vortices we motivate the conjecture by transforming the problem into the  independence of ergodic stroboscopic averages from initial data along a single trajectory, when using a suitable rotation angle as the independent variable instead of the time variable.
\end{abstract}
\section{Introduction}
\label{sec:1}

The vortex model \cite{H,K} is an Hamiltonian dynamical systems. Here we consider the case in which the vortices all have the same intensity $\omega=\frac{\Omega}{N},$ and we denote their coordinates as $z_1,..,z_N,$ where $z_i=(x_i,y_i); \,i=1,...,N.$

The equation of motion are 
\begin{equation}
\dot{z}_i=\nabla_{z_i}^{\perp}H; i=1,..,N
\end{equation}
where
\begin{equation}
    H=-\frac{\Omega}{4\pi N}\sum_{i\neq j}\log|z_i-z_j|.\label{Ham}
\end{equation}

Explicitating
in terms of the coordinates $(x_i, y_i)$ we get

 \begin{equation}
\begin{cases}
 \dot{x_i} = \frac{\Omega}{2\pi N}\sum_{j:j\neq i}\dfrac{y_i-y_j}{|z_i-z_j|^2},\\
 \dot{y_i} = -\frac{\Omega}{2\pi N}\sum_{j:j\neq i}\dfrac{x_i-x_j}{|z_i-z_j|^2}\label{VM}
   \end{cases}
   \end{equation}

These equation \eqref{VM} admits four first integrals: the two components of the center of mass $Z$ the inertial moment $I$ and the energy $H.$ The energy is defined in \eqref{Ham} while $Z$ and $I$ are given by
 \begin{equation}
\begin{cases}
Z = \frac{1}{N}\sum_iz_i,\\
I = \frac{1}{N}\sum_i|z_i|^2,
   \end{cases}
   \end{equation}

In this work we are interested in the characterization of the time averages of the solutions of the vortex model.

I particular we conjecture that the time average of the empirical vorticity density 
$$\omega_N=\frac1N \sum_{k=1}^N \delta_{z_i}$$
is radial (invariant by rotation around the origin).

The precise conjecture is as follows.
\begin{Conj}
\label{congetturan}
Let us consider the vortex model for N equal vortices $z=z_1,...,z_N,$ each one of vorticity $1.$ 
 Let us consider the set (sometimes this is called the pre-shape set)
$$\Gamma_N=\left\{z: \sum_{k=1}^N z_i=0,\,\frac1N\sum_{k=1}^N |z_i|^2=1\right\}$$ 
Then
for almost all $z\in \Gamma_N$ there exists the time average of the vorticity density, in the sense of of weak convergence of measures,
\begin{equation}\mu(dz)=\lim_{T\rightarrow\infty}\frac{1}{T}\int_0^T dt\, \frac1N \sum_{i=1}^N \delta_{z-z_i(t)}\end{equation}
and $\mu(dz)$ is radial (i.e. $\mu(dz)$ is invariant for rotations), that is:
for any smooth observable $\Phi:{\mathbb R}^2\rightarrow {\mathbb R},$ and for any $\theta\in [0,2\pi)$ there exists the time average 
$$<\Psi_{\theta}> =\lim_{t\rightarrow\infty}\frac1T\int_0^T dt\, \Psi_\theta(z(t))$$
of the one-particle observable 
$$\Psi_{\theta}(z)\equiv\frac1N\sum_{k=1}^N\Phi\circ R_{\theta}(z_k)$$
and it does not depends on $\theta.$
\end{Conj}

The 
preliminary
basis for this conjecture is that, being the system invariant by rotation around the origin, it seems difficult that it mantains coherence with respect to the rotation angle itself.

For instance, in the case of the three vortices, the system is integrable. In particular the shape of the triangle defined by the three vortices make, for almost all intial data, a periodic orbit of a certain period $T.$  This triangle, in a time $T$ rotates of a certain angle $\Delta \theta$ with respect to the origin and this angle is for almost all initial data irrational to $2\pi.$ Therefore the time average is rotationally invariant for almost all initial data.

For $N\geq 4$ vortices the system is not integrable in general. Nevertheless, one expects that for a.e. initial data the time average of the vorticity density is radial.

Indeed if the system is in a region of integrability   the motion is confined to invariant tori. The motion is periodic or quasi-periodic with a finite number of frequencies. These frequencies are expected to change with the intiial data  (by non degenerate conditions) and therefore are in general incommensurable with the rotation angle.
From the other side, if the motion is not integrable, we expect decays of time correlations, therefore we expect that the shape of the point vortices decorrelates from the rotation angle implying the radiality of the time average.

For $N\geq 4$ we are not able to prove the conjecture in general.

Nevertheless, we substantiate the ideas described above by transforming the problem in a suitable problem for the stroboscopic averages on a single trajectory of a measure preserving dynamical system.

More precisely, through a reparameterization of the system, we replace the temporal variable $t$ with a geometric rotation angle $\theta.$

Under suitable assumptions, the conjecture reduces to the claim that for a measure preserving dynamical systems the stroboscopic averages along a single trajectory does not depend on the initial point on the trajectory (see Problem \ref{ergodic-question}).

In section 2 we give a brief survey on the mathematical studies on the vortex model.
In section 3 we prove the Conjecture for $N=3$ vortices. 

Then, in Section 4, we reformulate the problem as a problem of suitable ergodic property of  stroboscopic averages along a single trajectory.

Finally, in Section 5 we intepret the result for $N=3$ following this strategy. Also, we give some hints on how to prove the result for $N\geq 4,$ in particular regions of the phase space.

\section{Brief survey on point vortex dynamics}

The mathematical study of discrete vortices begins with Helmholtz (1858), who showed that in an inviscid, incompressible fluid, vortex lines move with the fluid. Kirchhoff (1876) formalized the dynamics of $N$ point vortices in the plane as a Hamiltonian system. 

Gröbli (1877) \cite{Gro77} provided the first exhaustive analysis of the 3-vortex problem, demonstrating its integrability through explicit quadratures. Shortly thereafter, Poincaré \cite{Poi1893} established the formal Hamiltonian proof of this integrability by proving that the system admits three independent conserved quantities in involution. He recognized that the system's six degrees of freedom could be significantly reduced by exploiting the four integrals of motion: the Hamiltonian $H$ (energy), the center of vorticity (linear momentum), and the moment of inertia (angular momentum). Gröbli's analytical framework and its subsequent extensions are comprehensively detailed in the historical review by Aref, Rott, and Thomann \cite{ART92}.

Gröbli's primary contribution was the reduction of the problem to a set of differential equations for the squared distances between the vortices, $s_{jk} = |z_j - z_k|^2$. The equations derived by Gröbli effectively described the evolution of the triangle formed by the vortices, independent of its absolute position and orientation. In modern terms, this is interpreted as a motion on the "shape sphere". 

The reduction of the system's degrees of freedom  was achieved by Novikov \cite{Nov1975} and extended by Aref  \cite{Aref79}who introduced a global phase-sphere representation.

A critical advancement in handling these systems is vortex reduction. By utilizing the $SO(2)$ symmetries, the phase space can be reduced. While Jacobi coordinates have long been the standard for decoupling the center of mass, they typically introduce artificial coordinates that break the permutation symmetry of the problem. To overcome the coordinate singularities inherent in these classical methods, recent work by Anurag, Goodman, and O'Grady (2022) \cite{Anurag2024} introduced a new canonical reduction. By combining Jacobi coordinates with Nambu brackets, their approach maps the three-vortex motion onto a geometric representation that remains entirely globally non-singular across the phase space.

Turning now to the integrability properties of the vortex motion, in 1980 Ziglin \cite{Z} showed that the motion of four vortex is in general non integrable, while Khanin in 1982 \cite{Kh} showed that for four vortices there exists positive measure set of data for which the motion is quasi-periodic via KAM theory, see also \cite{CF} . In 1982 Aref et al. \cite{AP82} showed, with numerical simulations, that the 4-vortex problem is generally non-integrable, leading to chaotic advection. 

This is only a very limited overview. We refer the reader to Aref (2007)\cite{Aref07} for a much more detailed review

On the relation between the vortex model and the 2D Euler equation we refer to Marchioro and Pulvirenti \cite{MP}, and to Majda and Bertozzi books\cite{MB}.

\section{Three vortices}
In this section we prove Conjecture \ref{congetturan} when $N=3.$

In the case of 3 identical vortex the system is integrable. This allow us to prove that a.e. in the initial conditions the time average of the vorticity is radial.
We will suppose $I=1$ which is not restrictive since rescaling the system simply rescales time.\\
More precisely we prove that
\begin{thm}
\label{thm}
Let $z=z_1,z_2,z_3,$ and  let us consider the set 
$$S_1=\left\{z_1,z_2,z_3: \sum_{k=1}^{3}z_k=0,\, \sum_{i=1}^{3}|z_i|^2=1\right\}.$$
Then
\begin{enumerate}
\item For almost all $z\in S_1$ there exists the time average of the density
\begin{equation}\mu(dz)=\lim_{T\rightarrow\infty}\frac{1}{T}\int_0^T dt\, \frac13 \sum_{i=1}^3 \delta_{z-z_i(t)}\end{equation}
and $\mu(dz)$ is radial (i.e. $\mu(dz)$ is invariant for rotations).
\item There exist initial data for which the time average of $\mu$ is not invariant for rotations.
\end{enumerate}
\end{thm}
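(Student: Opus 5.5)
We reduce the dynamics on $S_1$ by the rotational symmetry and show that the shape motion is periodic. Over one shape period the configuration rotates by an angle $\Delta\theta$, and we check that $\Delta\theta/2\pi$ is irrational for almost every initial datum. Radiality of the time average then follows from Weyl equidistribution.

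\emph{Step 1 (reduction).} Write $z_k(t)=e^{i\theta(t)}w_k(t)$, where $w=(w_1,w_2,w_3)$ lies in the quotient $S_1/SO(2)$, i.e.\ the shape sphere. Concretely, we use the squared side lengths $s_{jk}=|z_j-z_k|^2$ as in Gr\"obli's reduction. These satisfy $s_{12}+s_{13}+s_{23}=3\sum|z_i|^2=3$ (the center of mass is zero), and $H$ fixes the product $s_{12}s_{13}s_{23}$. The shape therefore moves on the level curves of $H$ on the two-dimensional shape sphere. Away from the finitely many relative equilibria (equilateral and collinear configurations) and the separatrices through the collinear saddles, these level curves are closed regular curves. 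Hence the shape is periodic with period $T(h)$, which depends analytically on the energy $h$ in each annular region of the sphere foliated by periodic orbits. The angle satisfies $\dot\theta=F(w)$ for a smooth function $F$ of the shape, obtained from the equations of motion as the angular component of the velocity. Integrating over one period gives $\Delta\theta(h)=\int_0^{T(h)}F(w(t))\,dt$. This is the same for every point of the orbit, since a phase shift of the shape does not change the integral over a full period.

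\emph{Step 2 (equidistribution).} Let $\Psi_\theta$ be as in Conjecture \ref{congetturan} and set $G(w,\alpha)=\frac13\sum_k\Phi(e^{i\alpha}w_k)$. The time average of $\Psi_\theta$ over $[0,nT]$ equals $\frac{1}{nT}\sum_{j<n}g(\theta_0+j\Delta\theta+\theta)$, where $g(\beta)=\int_0^{T}G\big(w(s),\beta+\theta(s)-\theta_0\big)ds$ is a continuous $2\pi$-periodic function. If $\Delta\theta/2\pi\notin\mathbb Q$, Weyl's theorem makes this converge to $\frac{1}{2\pi T}\int_0^{2\pi}g$, which is independent of $\theta$. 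Remainders over incomplete periods are $O(1/n)$. This establishes both existence of the limit and radiality of $\mu$.

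\emph{Step 3 (main obstacle): irrationality a.e.} The set $\{h:\Delta\theta(h)\in 2\pi\mathbb Q\}$ is a countable union of level sets of the real-analytic function $h\mapsto\Delta\theta(h)$ on each energy interval. It is therefore countable, unless $\Delta\theta$ is constant on some interval. By the coarea formula, a countable set of energies corresponds to a null set in $S_1$, because each energy level is a measure-zero hypersurface. So what must be excluded is $\Delta\theta\equiv$ const on an interval. We would compare limits at the ends of each interval. Near the equilateral relative equilibrium, $\Delta\theta$ tends to $\Omega_{eq}T_{\rm lin}$, a computable finite value given by the linearized frequency. Near the separatrix $T(h)\to\infty$ logarithmically, and $\Delta\theta(h)\sim\Omega_{\rm coll}T(h)\to\infty$, since the collinear equilibrium rotates at a nonzero rate. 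A non-constant limit behaviour forces $\Delta\theta$ to be non-constant, hence analytic and non-constant, and the bad set is countable. This asymptotic computation, where one must keep the analyticity and the separatrix estimate honest, is the delicate part.

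\emph{Part 2.} Take the equilateral relative equilibrium, or any periodic shape orbit with $\Delta\theta\in2\pi\mathbb Q$, for instance the rigidly rotating collinear configuration. The motion is then a rigid rotation of three points at radius $r$ with angular velocity $\Omega_{eq}$. The time average of the empirical density is the uniform measure on the circle, which is radial. So instead we choose an orbit with $\Delta\theta=2\pi p/q$. Such orbits exist by continuity of $\Delta\theta$, which takes a full interval of values. For these orbits the configuration is exactly periodic in time with period $qT$, and $\mu$ is the average over one period. Its support is a compact set: the union of three closed trajectory curves. This set is not rotation invariant for generic such orbits, because the radius $|z_k(t)|$ oscillates in a way locked to the angle. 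This can be verified by computing one nonzero Fourier mode $\int e^{iq\phi}\mu$.
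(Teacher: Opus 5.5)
Your proposal is correct and follows essentially the same route as the paper. Both reduce to periodic shape orbits and use a rotation angle $\Delta\theta$ per shape period that is real-analytic in the orbit parameter and diverges as orbits approach the heteroclinic cycle through the collinear equilibria, so it is nonconstant and a.e.\ irrational to $2\pi$; part 2 then comes from rational-rotation periodic orbits. Your Weyl-equidistribution step and general analyticity of the period function make explicit what the paper only asserts or computes by quadrature. To make part 2 airtight, note that the support of $\mu$ is a Lebesgue-null union of three closed curves along which some $|z_k|$ is nonconstant, so it cannot contain the annulus that rotation invariance would force.
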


The proof follows here. 

The three vortex system is integrable due to the fact that the vorticity center and the inertial moment are conserved. 

As we said above there are many ways to deal with the three vortex system, we will use the change of variables introduced in \cite{Anurag2024} for the case of three identical vortices.\\
Let $\zeta_i$ be the cartesian coordinates in $R^2$ of the i-th vortex. Let as assume $\zeta_1+\zeta_2+\zeta_3=0$.\\
We use the Jacobi reduction to obtain  
\begin{eqnarray}
    && R_1=\zeta_1-\zeta_2\notag\\
    && R_2=\frac{\zeta_1+\zeta_2}{2}-\zeta_3\notag\\
    && R_3=\frac{\zeta_1+\zeta_2+\zeta_3}{3}=0.
    \label{coord0}
\end{eqnarray}
We now define $\mathcal{R}_i\in\mathbb{C}$:
\begin{eqnarray}
    && \mathcal{R}_1=\sqrt{\frac{1}{2}} R_1\notag\\
    && \mathcal{R}_2=\sqrt{\frac{2}{3}}R_2.
\end{eqnarray}
We now introduce the variables $X,Y,Z$ trough which is possible to describe the relative positions of the three vortices but not their absolute position.
\begin{eqnarray}
    && Z=|\mathcal{R}_1|^2-|\mathcal{R}_2|^2\notag\\
    && X=2\mathcal{R}_1\cdot\mathcal{R}_2\notag\\
    && Y=2\mathcal{R}_1\cdot\mathcal{R}_2^\perp
\end{eqnarray}
Let us notice that $X^2+Y^2=4|\mathcal{R}_1|^2|\mathcal{R}_2|^2$ and therefore $X^2+Y^2+Z^2=(|\mathcal{R}_1|^2+|\mathcal{R}_2|^2)^2=I^2=1$.\\
Furthermore
\begin{eqnarray}
    && \frac{Z+1}{2}=\mathcal{R}_1^2\notag\\
    &&\frac{1-Z}{2}=\mathcal{R}_2^2\notag
\end{eqnarray}
So from $X,Y,Z$ it is possible to derive the modulus of $\mathcal{R}_1,\mathcal{R}_2$, the angle between them but the absolute position is not obtainable .\\
 The variables satisfy (see \cite{Anurag2024})

\begin{equation}
\begin{cases}
\dot{X} = 4 Y\frac{\partial H}{\partial Z}\\
\dot{Y} = 4 Z\frac{\partial H}{\partial X}- 4 X\frac{\partial H}{\partial Z}\\
\dot{Z}= - 4 Y\frac{\partial H}{\partial X}
\end{cases} 
\label{XYZdot}
\end{equation}
where $H$ defined in \ref{Ham} becomes
\begin{equation}H=-\frac{1}{2}\log W(X,Z)\label{HandW}\end{equation}
and
\begin{equation} W(X,Z)=\left(1 + Z\right)\left(1 - \frac{Z}2 -\frac{\sqrt{3}}{2} X\right)\left(1 - \frac{Z}2 +\frac{\sqrt{3}}{2} X\right)=\left(1+Z\right)\left(\left(1-\frac{Z}2\right)^2-\frac34 X^2\right)\label{Ham}\end{equation}
We can notice that $H$ depends only on $X$ and $Z.$\\

It is easy to see that apart for a zero measure set all the orbit are periodic in the variables $X,Y,Z.$ 

The level set of $W$ in the plane $X,Z$ are plotted in figure 1. Being  
\begin{equation}
    X^2+Y^2+Z^2=1
    \label{sfera}
\end{equation}  the phase space is the disk $X^2+Z^2\leq 1.$ In figure 1 the black circle is the set $X^2 + Z^2 =1,$ that is $Y=0.$ 
\begin{figure}[!h]
\centerline{
\includegraphics[scale=.5]{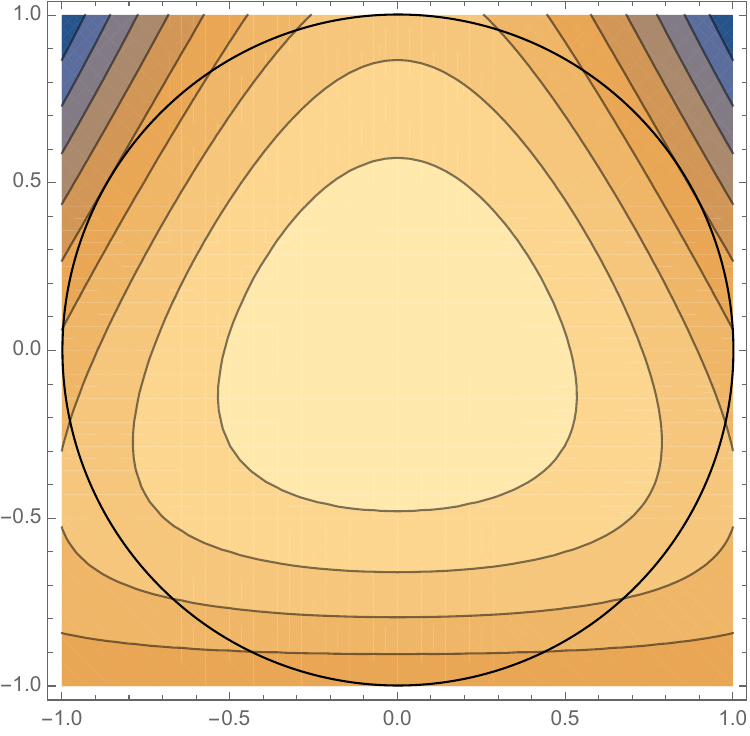}}
\caption{XZ plane}
\end{figure}

In the following we will solve the system by quadratures, expressing $\dot{Z}$ in terms of $Z.$

From \eqref{XYZdot} and \eqref{HandW} we get 
\begin{equation}\dot{Z}= - 4 Y\frac{\partial H}{\partial X} = 2 Y \frac{1}{W}\frac{\partial W}{\partial X}= -3 \frac{X Y (1 + Z)}{W}\label{Zdot}\end{equation}

There exist two main cases. Or the orbit is fully inside the circle, case I, or the orbit hits the circle, case II.
Before approaching the two cases, we will need to calculte $\dot{\theta}.$
\subsection{Total rotation during shape-periodic motion}
We are interested in the total rotation of the vortices when $(X,Y,Z)$ make a full cycle, i.e. defining 
$$(x_i,y_i)=(\rho_i \cos\Theta_i,\rho_i \sin\Theta_i); i=1,2,3,$$
we want to compute the rotation of the system. To this aim we define 
$\dot{\Theta}$ as
$$\dot{\Theta}=\frac{\dot{\Theta_1}+\dot{\Theta_2}+\dot{\Theta_3}}{3}$$
and we compute the time integral of $\dot{\Theta}$ on the shape period.
Notice in fact that when $X,Y,Z$ make a full cycle the system comes back to the initial position apart for a rotation of  certain angle. If this angle is rational the full motion is periodic, while if this angle is irrational to $2\pi$ then the average empirical density is radial.\\
We can invert this change of coordinates given in \eqref{coord0}\cite{Anurag2024} to obtain
\begin{eqnarray}
    &&\zeta_1=\frac{R_1}{2}+\frac{R_2}{3}\notag\\
    &&\zeta_2=-\frac{R_1}{2}+\frac{R_2}{3}\notag\\
    &&\zeta_3=-\frac{2R_2}{3} .\label{coord1}
\end{eqnarray}
So we get
\begin{equation}
 R_1^2=Z+1, \, R_2^2=\frac{3}{4}(1-Z), \, R_1\cdot R_2=\frac{\sqrt{3}}{2} X,\, R_1\cdot R_2^\perp=\frac{\sqrt{3}}{2} Y \,.\label{coord3}
\end{equation}
We can use \ref{coord3} and \ref{coord1} to calculate $\dot{\theta_1}.$ 
\begin{eqnarray}
   && \dot{\theta_1}=\frac{\zeta_1^\perp}{\zeta_1^2}\cdot\bigg(\frac{(\zeta_1-\zeta_2)^\perp}{|\zeta_1-\zeta_2|^2}+\frac{(\zeta_1-\zeta_3)^\perp}{|\zeta_1-\zeta_3|^2}\bigg)=\notag\\
   && \frac{1}{6(\frac{Z+1}{4}+\frac{\sqrt{3}X}{6}+\frac{1-Z}{12})}\bigg(3+\frac{\sqrt{3}X}{(Z+1)}+ \frac{3+2\sqrt{3}X}{(\frac{Z+1}{4}+\frac{\sqrt{3}X}{2}+\frac{3(1-Z)}{4})}\bigg)=:f(X,Y,Z).
\end{eqnarray}
We know from \cite{Anurag2024} that a rotation of $\frac{2\pi}{3}$ in the XZ plane  correspond to permutations of the vortex labels.\\
Therefore, denoting with $R_\theta$ the rotation of angle $\theta$ in the XZ plane, we get
\begin{eqnarray}
    &&  \dot{\theta}=\frac{f(X,Y,Z)+f(R_{\frac{2\pi}{3}}(X,Y,Z))+f(R_{\frac{4\pi}{3}}(X,Y,Z))}{3}=\notag\\
    &&-12\frac{(X^2+Z^2-1)(X^2+Z^2-4)}{(\sqrt{3}X+Z+2)(\sqrt{3}X+Z-2)(\sqrt{3}X-Z+2)(\sqrt{3}X-Z-2)(Z-1)(Z+1)} \label{Theta}
\end{eqnarray}
\begin{Rmk}
\label{Rmk1}
    $\dot{\theta}$ is a regular function outside of the six planes
    \begin{eqnarray}
        && X=\pm\frac{Z\pm 2}{\sqrt{3}}\notag\\
        && Z=\pm 1.
    \end{eqnarray}    
    Those planes are tangent to $X^2+Y^2+Z^2=1$ in the points $(\pm \frac{\sqrt{3}}{2}, 0, \pm \frac{1}{2}), (0,0,\pm 1)$ which are three equilibria and three singularities \cite{Anurag2024}.\\
    Therefore the denominator cannot change sign since to change the sign of a monomial in the denominator you need to cross one of the planes.\\
    Furthermore the numerator is nonnegative and it is $0\iff X^2+Z^2=1\iff Y=0$
\end{Rmk}
\begin{Rmk}
\label{Rmk2}
    The singularity cannot be approached by any motion, the equilibria can only be approached by an asymptotic motion. \\
    Let us calculate the angular velocity as a motion goes toward the equilibria (0, 0, 1).\\
    Inserting those coordinates in \ref{Ham} we obtain a relationship between $X$ and $Z$ for the asymptotic motion.
    \begin{equation}
        X^2=\frac{4}{3}\bigg((1-\frac{Z}{2})^2-\frac{1}{2(Z+1)}\bigg)
    \end{equation}
    Inserting that in \ref{Theta} we obtain 
    \begin{equation}
        \dot{\theta}= \frac{(10 + 12 Z - 4 Z^3)}{(3 + 3 Z)}
    \end{equation}
    And we can see that $\dot{\theta} \rightarrow 3$ as $Z\rightarrow 1.$ 
\end{Rmk}

\begin{figure}[!h]
        \centering
        \includegraphics[width=0.5\linewidth]{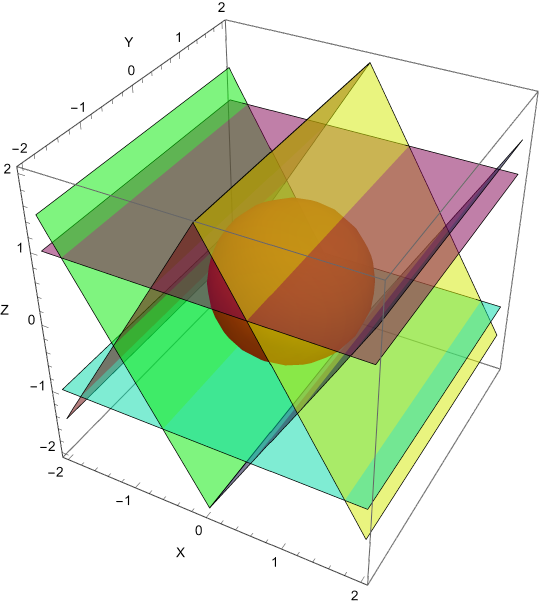}
        \caption{The sphere with the critical planes}
        \label{fig:2}
    \end{figure}

\subsection{Case 1.} Let us consider for first case I.
In this case $\dot{Z}=0$ if and only if $X=0.$ In fact, $Y$ does not vanish for these orbit. Therefore the system can be solved by quadrature, that is expressing   $\dot{Z}$ in terms of $Z$ and then integrating $\frac{dZ}{\dot{Z}}$ to get the time to go from a certain $Z$ to another.

The maximum and minimum value of $Z$ on the orbit is attained for $X=0,$ therefore on the solution of 
\begin{equation}
(1+Z)(1-\frac{Z}{2})^2=w
\end{equation}
where we have labeled with $w$ the values of $W.$ 

Here it is convenient to label the orbits in terms of $R,$ where $R$ is the maximum value of $Z$ on the orbit, i.e. the value of $Z$ for which $X=0$, $Z$ positive. 

In this way 
\begin{equation}w=(1+R)(1-\frac{R}{2})^2,\label{wofR}\end{equation}
 and we already know the one of the roots of the cubic is $Z=R.$ 

In this case $Z$ moves between $Z_1$ and $Z_2$ where $Z_2=R$ and   $Z_1$ is the other solution of the cubic
\begin{equation}
(1+Z)(1-\frac{Z}{2})^2=(1+R)(1-\frac{R}{2})^2
\end{equation}
that satisfies $|Z|< 1,$ that is $Z_1=\frac{1}{2}\left(3-R- \sqrt{3} \sqrt{3+2R-R^2}\right).$ For the sequel it is convenient to denote with $Z_3$ the third solution of the cubic, i.e. $Z_3=\frac12\left(3-R+\sqrt{3} \sqrt{3+2R-R^2}\right).$  We can noitce that $Z_3>1.$
In order to find $\dot{Z}$ we have to express $X$ and $Y$ as functions of $Z.$ From \eqref{Ham} and \eqref{wofR} we find 
\begin{equation}
    X=-\frac{\sqrt{3 R^2-R^3-3 Z^2 + Z^3}}{\sqrt{3}\sqrt{1+Z}}
    \label{X(R,Z)}
\end{equation}
$$ $$
while $Y$ is given by
 $$Y=\sqrt{1-X^2-Z^2}=\frac{\sqrt{3 - 3 R^2 +R^3 +3 Z - 4 Z^3}}{\sqrt{3}\sqrt{1+Z}}$$

Substituting in \eqref{Zdot} we find
\begin{equation}
\dot{Z}=\frac{1}{w} \sqrt{3 R^2-R^3-3 Z^2 + Z^3}\sqrt{3 - 3 R^2 +R^3 +3 Z - 4 Z^3}
\label{zetapunto}
\end{equation}

For the sequel it is convenient ot express, at least partially, the polynomial inside the square roots as a function of $Z-1,\,Z_2,\,Z_3.$ In this way we get 
\begin{equation}\dot{Z}=\frac{1}{w}\sqrt{(Z-Z_1)(Z_2-Z)(Z_3-Z)(3 +3 Z -4 Z^3 +Z_1 Z_2 Z_3)}\end{equation}

It is important to notice that $\dot{Z}$ only vanishes for $Z=Z_1$ and $Z=Z_2.$\\
Being $X^2 + Z^2 \leq 1$ we can notice that the arguments of the square roots are positive.
\\

Subsituting \ref{X(R,Z)} in \ref{Theta} we get
\begin{equation}
   \dot{\theta}= -\frac{(4 (3 - 3 R^2 + R^3 + 3 Z - 4 Z^3) (12 - 3 R^2 + R^3 + 12 Z - 4 Z^3))}{3 (-2 + R)^2 (1 + R) (-1 + Z^2) (4 - 3 R^2 + R^3 + 8 Z + 8 Z^2)}
   \label{Theta(R)}
\end{equation}
Finally, putting together \ref{zetapunto}, \ref{Theta(R)} and \ref{wofR} we get

\begin{equation}
\Delta\Theta=\int_{Z_1}^{Z_2}\frac{dZ}{\dot{Z}}\dot{\Theta}=\int_{Z_1}^{Z_2}\frac{1}{\sqrt{(Z-Z_1)(Z_2-Z)}}f(R,Z) dZ
\end{equation}
Where f(R,Z)=
\begin{equation}
   -\frac{4(1+R)(1-R/2)^2(3 - 3 R^2 + R^3 + 3 Z - 4 Z^3) (12 - 3 R^2 + R^3 + 12 Z - 4 Z^3)}{3\sqrt{(Z_3-Z)(3+3Z-4Z^3+Z_1 Z_2 Z_3)}(-2 + R)^2 (1 + R) (-1 + Z^2) (4 - 3 R^2 + R^3 + 8 Z + 8 Z^2)}
\end{equation}
It is now convenient to scale and translate $Z$ in such a way to integrate between $-1$ and $1,$ that is $Z=\frac{Z_1+Z_2+(Z_2-Z_1)u}{2},$ 
and then changing again variables integrating in $d\phi$ where $u=\sin\phi.$ In this way we get
\begin{align}\Delta\Theta&=\int_{Z_1}^{Z_2}\frac{1}{\sqrt{(Z-Z_1)(Z_2-Z)}}f(R,Z)=\int_{-1}^1\frac{du}{\sqrt{1-u^2}}f\left(R,\frac{Z_1+Z_2+(Z_2-Z_1)u}{2}\right)\\
&=\int_{-\frac{\pi}{2}}^{\frac{\pi}{2}}\,d\phi\, f\left(R,\frac{Z_1+Z_2+(Z_2-Z_1) \sin\phi}{2}\right)\label{dthetafinal}
\end{align}

Looking at the formula above we can  notice that the $\Delta\Theta$ is an analytic function of $R$ for $R$ in the set $(0, h)$ for each $h\lneq 1$.\\
In fact, $\Delta\Theta$ is the integral of an analytic function (the argument of the square root does not vanish, it is strictly positive, in the integration set).

Now, it is easy to see that this function is not constant.

As shown in \ref{fig:2}, in the phase space there are three equilibrium points.\\
There exist exactly 6 heteroclinic trajectories connecting these equilibria \cite{Anurag2024}: from each equilibrium point, there originate two distinct trajectories asymptotically approaching the other two equilibria, and conversely, two trajectories arrive at each equilibrium from the remaining two.
As we choose orbits $\gamma_n$ arbitrarily close to these heteroclinic trajectories (we are sending $R\rightarrow 1$) the time these orbits spend near the equilibria goes to infinity. This, combined with \ref{Rmk1}, \ref{Rmk2}, gives us 
\begin{equation}
    \lim_{R\rightarrow 1^-} \Delta \theta (R) = \infty.
\end{equation}

\begin{figure}[!h]
    \centering
    \includegraphics[width=0.5\linewidth]{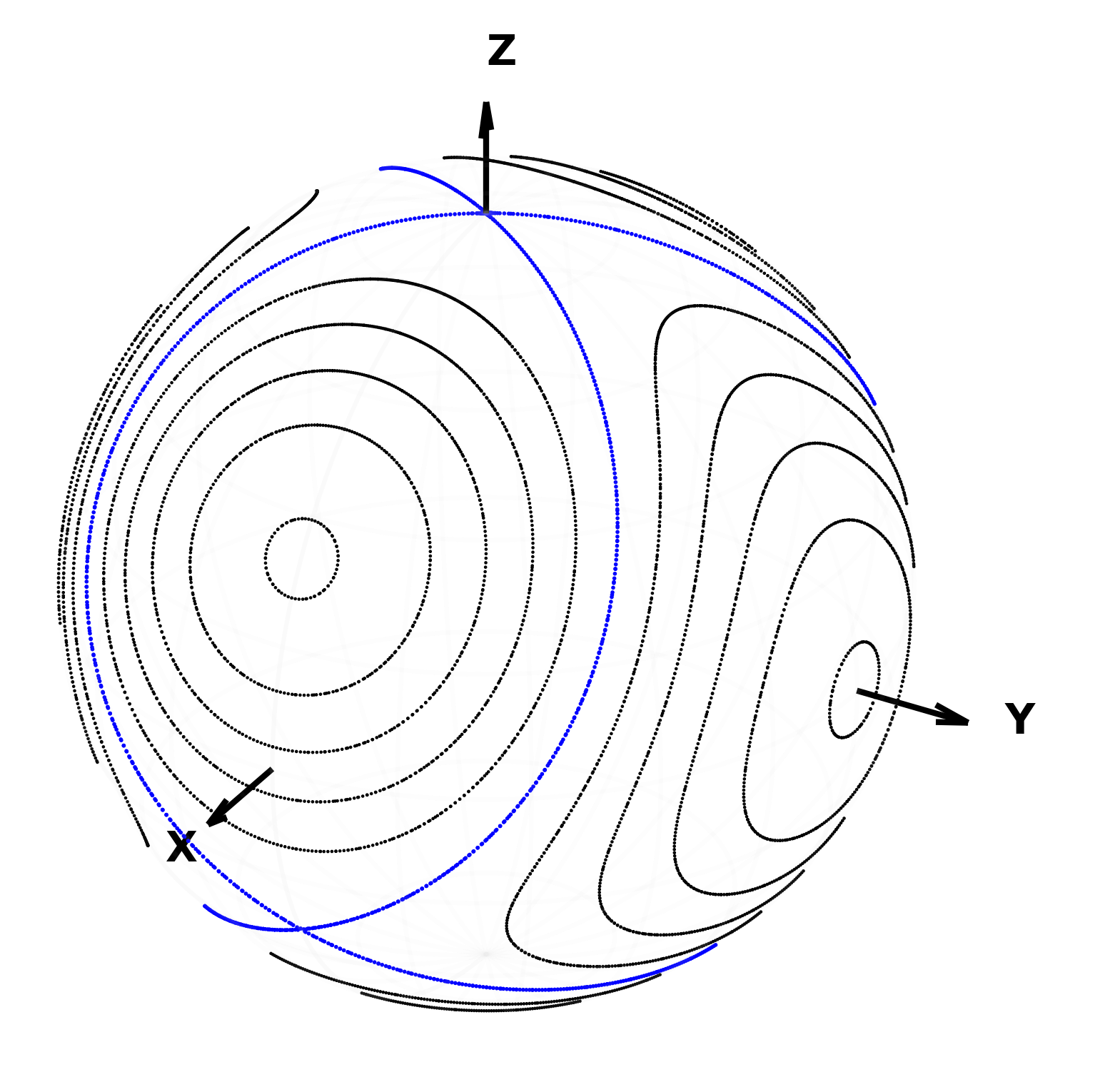}
    \caption{The heteroclinic orbits are shown in the figure in a darker shade of blue.}
    \label{fig:3}
\end{figure}

Therefore $\Delta \theta$ has not intervals in which it is constant, hence it is almost everywhere irrational.

Therefore we have proved the first part of the Theorem.

The second claim can be simply proven by noticing that there are infinitely many values of $R$ for which $\Delta\Theta$ is rational. In this case the orbit is periodic and the time average of the density is non radial because the three vortices move on a closed curve that is not a circle. 

\subsection{Case 2.}
In this section we prove \ref{thm} in the case 2, when the motion crosses $Y=0$.\\
We can see in fig \ref{fig:3} that these motions are confined in one of the three region of the spherical surfaces enclosed by two heteroclinic orbits.\\
Since rotation of $\frac{2}{3}\pi$ in the $X Y Z$ space corresponds to vortex permutation (see \cite{Anurag2024}) we can restrict our analysis to one of the three regions, the one between the equilibrium points $(0, 0, 1)$ and $(\frac{\sqrt{3}}{2}, 0, -\frac{1}{2})$ without loss of generality.\\
In this region $\dot{Z}=0\iff Y=0$ (see \ref{Zdot}).\\
 We know from \cite{Anurag2024}  that all the orbits in the region are periodic and they intersect $Y=0$ in exactly two points one before and one after the singularity in $(\frac{\sqrt{3}}{2}, 0, \frac{1}{2})$.\\
 Given $\alpha\in (0, \frac{\pi}{3})$ let's define $p_{\alpha}:= (\operatorname{sin(\alpha)}, 0, \operatorname{cos(\alpha)})$, the rotation of angle $\alpha$ around the $Y$ axis of $(0,0,1).$\\
 Let's consider $G(\alpha):= W(p_{\alpha})$. 
A straightforward calculation shows that
\begin{equation}
    G(\alpha)=G(\frac{2\pi}{3}-\alpha)
\end{equation}
This implies that an orbit passes trough $p_{\alpha}$ if and only if it passes trough $p_{\frac{2\pi}{3}-\alpha}=(\operatorname{sin(\frac{2\pi}{3}-\alpha), 0, \operatorname{cos(\frac{2\pi}{3}-\alpha)}})$.
\\
Let's suppose we have an orbit with initial data $p_\alpha$. This implies (see \ref{Ham})\\
\begin{equation}
    W=(1+\operatorname{cos(\alpha)})((1-\frac{\operatorname{cos(\alpha)}}{2})^2-\frac{3}{4}\operatorname{sin(\alpha)^2})=\frac{1+\operatorname{cos(3\alpha)}}{4}
    \label{W}
\end{equation}
From \ref{Ham} and \ref{W} we obtain 
\begin{equation}
    X=\pm 2\sqrt{\frac{(1-\frac{Z}{2})^2-\frac{W}{1+Z}}{3}}=\pm\sqrt{\frac{Z^3-3Z^2+3-\operatorname{cos(3\alpha)}}{3+3Z}}
    \label{x}
\end{equation}
from \ref{sfera} and \ref{x} we obtain 
\begin{equation}
    Y=\sqrt{\frac{-4Z^3+3Z+\operatorname{cos(3\alpha)}}{3+3Z}}
    \label{y}
\end{equation}
We can use the fact that $Y$ vanishes for $Z=p_\alpha, p_{\frac{2\pi}{3}-\alpha}$ to factorize the numerator of \ref{y} and calculate its third root.\\
Let us define $Z_{1,\alpha}:=\operatorname{cos(\alpha)}, Z_{2,\alpha}:=\operatorname{cos(\frac{2\pi}{3}-\alpha)}, Z_{3,\alpha}:=\operatorname{cos(\frac{2\pi}{3}+\alpha)}$
\begin{eqnarray}
\label{y2}
    &&Y=\sqrt{\frac{(\operatorname{cos(\alpha)}-Z)(Z-\operatorname{cos(\frac{2\pi}{3}-\alpha)})(4Z+2(\operatorname{cos(\alpha})+\sqrt{3}\operatorname{sin(\alpha)})}{3+3Z}}=\notag \\
    &&2\sqrt{\frac{(\operatorname{cos(\alpha)}-Z)(Z-\operatorname{cos(\frac{2\pi}{3}-\alpha)})(Z+\frac{1}{2}(\operatorname{cos(\alpha})+\sqrt{3}\operatorname{sin(\alpha)})}{3+3Z}}=\notag \\
    &&2\sqrt{\frac{(\operatorname{cos(\alpha)}-Z)(Z-\operatorname{cos(\frac{2\pi}{3}-\alpha)})(Z-\operatorname{cos(\alpha+\frac{2\pi}{3}}))}{3+3Z}}=\notag\\
    &&2\sqrt{\frac{(Z_{1,\alpha}-Z)(Z-Z_{2,\alpha})(Z-Z_{3,\alpha})}{3+3Z}}
\end{eqnarray}
    Let us observe that all the factors in the square root are greater than or equal to zero during the motion since $Z\in (\cos(\frac{2\pi}{3}-\alpha), \cos(\alpha))$.\\
    Now we combine \ref{Zdot}, \ref{W}, \ref{x} and \ref{y2} to obtain
\begin{eqnarray}
     \label{Zdot2}
    &&\dot{Z}=\frac{-8}{1+\operatorname{cos(3\alpha)}}\sqrt{\bigg(Z^3-3Z^2+3-\operatorname{cos(3\alpha)}\bigg)\bigg(Z-Z_{3,\alpha}\bigg)}\times\notag\\
    &&\times\sqrt{\bigg(Z_{1,\alpha}-Z\bigg)\bigg(Z-Z_{2,\alpha}\bigg)}=\notag\\
    &&h(Z,\alpha)\cdot\sqrt{\bigg(Z_{1,\alpha}-Z\bigg)\bigg(Z-Z_{2,\alpha}\bigg)}
\end{eqnarray}
Where 
\begin{equation}
    h(Z,\alpha)=\frac{-8}{1+\operatorname{cos(3\alpha)}}\sqrt{\bigg(Z^3-3Z^2+3-\operatorname{cos(3\alpha)}\bigg)\bigg(Z-Z_{3,\alpha}\bigg)}
\end{equation}
does not vanish in the orbit.\\
   We can therefore calculate $\Delta\Theta$ as a function of $\alpha$.\\
   $\dot{\Theta}=\dot{\Theta}(Z, \alpha)$ is obtained combining \ref{Theta} and \ref{x}.
\begin{eqnarray}
    &&\Delta\Theta= 2\int_{Z_{1,\alpha}}^{Z_{2,\alpha}} \frac{dZ}{\dot{Z}}\dot{\Theta}=2\int_{Z_{1,\alpha}}^{Z_{2,\alpha}}\frac{dZ}{\sqrt{\bigg(Z_{1,\alpha}-Z\bigg)\bigg(Z-Z_{2,\alpha}\bigg)}}\frac{\dot{\Theta}(Z,\alpha)}{h(Z,\alpha)}\notag\\
    &&2\int_{Z_{1,\alpha}}^{Z_{2,\alpha}} \frac{dZ}{\sqrt{\bigg(Z_{1,\alpha}-Z\bigg)\bigg(Z-Z_{2,\alpha}\bigg)}}g(Z,\alpha)
\end{eqnarray}
where 
\begin{eqnarray}
    g(Z,\alpha)= \frac{\dot{\Theta}(Z,\alpha)}{h(Z,\alpha)}
\end{eqnarray}
Reasoning as in the end of section 2.1 we scale and translate $Z$ in such a way to integrate between $-1$ and $1,$ that is $Z=\frac{Z_{1,\alpha}+Z_{2,\alpha}+(Z_{2,\alpha}-Z_{1,\alpha})u}{2},$  
and then changing again variables integrating in $d\phi$ where $u=\sin\phi.$ In this way we get
\begin{eqnarray}
    &&\Delta\Theta=\int_{-\frac{\pi}{2}}^{\frac{\pi}{2}}\,d\phi\, g\left(\alpha,Z=\frac{Z_{1,\alpha}+Z_{2,\alpha}+(Z_{2,\alpha}-Z_{1,\alpha})\sin\phi}{2}\right)
\end{eqnarray}

 Proceeding as in Case 1,  we can  notice that the $\Delta\Theta$ is an analytic function of $\alpha.$ \\
 In fact, $\Delta\Theta$ is the integral of an analytic function (the argument of the square root does not vanish, it is strictly positive, in the integration set).
Furthermore  thanks to \ref{Rmk1}, \ref{Rmk2} and the fact that getting closer to heteroclinic orbits the period goes to $+\infty$, as $\alpha\rightarrow 0$  $\Delta \Theta\rightarrow+\infty$, therefore $\Delta\Theta $ is not constant.\\
This proves Theorem \ref{thm}.
\section{The general case}
In this section we present an approach to the conjecture that transforms the problem in a particular interesting properties of the ergodic averages of the systems.
The main idea is to eliminate the time and to define as an independent variable a suitable chosen angle.

More precisely, given $N$ vortices $x_1,...x_N$ with center of mass in the origin and moment of inertia fixed we want to describe the system with a shape variable $X$ in a shape space $S$ and an angle $\theta.$ In other word we consider the quotient of the configuration space with respect to a rotation (of an angle $\theta$) around the origin.

Ths space $S$ is equivalent to $\mathbb{CP}^{N-2}.$

We want to choose coordinates in such a way that the 
 $$x = R_{\theta} X$$
 that is
 $$x_i=R_{\theta}X_i; i=1,...,N$$
 
 Then we require the three following conditions to hold:
 \begin{assumptions}
 \label{ass1}
 $\phantom{a}$
\begin{enumerate}
\item $\dot{X}= f(X)$
\item $\dot{\theta}=g(X)$
\item $g(X)>\alpha>0$
\end{enumerate}
\end{assumptions}

The two first conditions are for free because the system is invariant by rotations.

The last condition, which is necessary to change variables between time and angle, is less obvious. It is important here how we define the angle. 
In the sequel we will show some possible choices for the angle.
For now we suppose we can fulfill the assumptions.

If the assumption are satisfied we can use the angle as the independent variables, that is, denoting by $'$ the derivative with respect to the angle, 
\begin{equation}X'=\frac{d X}{d \theta} = \frac{\dot{X}}{\dot{\theta}}=\frac{f(X)}{g(X)}\equiv F(X)\label{time2angle}\end{equation} 
where $F(X)$ is defined by the last equality.

Now, the time average of a smooth observable $\Psi$ may be nicely written as an angle average:
\begin{eqnarray}<\Psi>&=&\lim_{T\rightarrow\infty}\frac 1T \int_0^T dt\, \Psi(x(t))=\lim_{T\rightarrow\infty}\frac 1T \int_{\theta_0}^{\theta(T)} \frac{d\theta}{\dot{\theta}}\,\Psi(R_{\theta}X(\theta))\\&=&\lim_{T\rightarrow\infty}\frac 1T \int_{\theta_0}^{\theta(T)} d\theta\, \frac{\Psi(R_{\theta}X(\theta))}{g(X(\theta))}=\lim_{\Theta\rightarrow\infty}\frac 1{t(\Theta)} \int_{{\theta_0}}^{\Theta} d\theta\, \frac{\Psi(R_{\theta}X(\theta))}{g(X(\theta))}\end{eqnarray}
where with an abuse of notation we have denoted $X(t(\theta))$ with $X(\theta),$  and where we denoted with $t(\theta)$ the function inverse to $\theta(t).$

As we shall see soon, in this way, the invariance by rotation (the radiality) of the vorticity density is equivalent to the fact that for the dynamical system $X'=F(X),$ along a trajectory $X(\theta),$ the stroboscopic averages of an observable  at "times" $\alpha + k \beta; k\in\mathbb{N}$ does not depend (for a given $\beta$) on the initial condition $\alpha$ for almost all trajectories.

In fact, one can easily prove, see Appendix A, the following theorem.

\begin{thm}
\label{rotinv}
Let $\Psi: \mathbb{R}^2 \to \mathbb{R}$ be a continuous and bounded observable. Consider a dynamical system $x(t)$ decomposed into a shape component $X(t) \in \mathcal{S}$ and a rotation angle $\theta(t) \in \mathbb{S}^1$, such that $x(t) = R_{\theta(t)}X(t)$, where $R_\theta$ denotes the rotation operator. 

Let us suppose that Assumptions 4.1 are satisfied and let us consider the dynamical system \eqref{time2angle}.

Assume that for any observable $ \Xi$ and for a fixed sampling interval $\alpha = 2\pi$, the discrete average:
\begin{equation}
    \Xi(\theta_0) = \lim_{N \to \infty} \frac{1}{N} \sum_{k=0}^{N-1} \Xi(R_{\theta_0 + 2\pi k} X(\theta_0+2\pi k)
\end{equation}
is independent of the initial phase $\theta_0$. Then, the continuous time average $\bar{\Psi}$ is rotationally invariant (radial), satisfying $<{\Psi}>(R_\delta x_0) = \bar{\Psi}(x_0)$ for all $\delta \in [0, 2\pi)$.
\end{thm}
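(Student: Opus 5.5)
The plan is to pass from the time average to an angle average via \eqref{time2angle}, split the angle axis into windows of length $2\pi$, and reorganize the integral as a family of stroboscopic sums indexed by the phase $s\in[0,2\pi)$. The hypothesis on those sums then transfers the required invariance.

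By Assumptions \ref{ass1}, $\theta(t)$ is strictly increasing with $\dot\theta\ge\alpha>0$, so $\theta(t)\to\infty$ and we may use $\theta$ as the independent variable. The computation preceding the statement gives
\[
\frac1T\int_0^T\Psi(x(t))\,dt=\frac{1}{t(\Theta)}\int_{\theta_0}^{\Theta}\frac{\Psi(R_\theta X(\theta))}{g(X(\theta))}\,d\theta,
\qquad t(\Theta)=\int_{\theta_0}^{\Theta}\frac{d\theta}{g(X(\theta))}.
\]
Because $1/g\le 1/\alpha$ and $\Psi$ is bounded, it is enough to let $\Theta=\theta_0+2\pi M$ with $M\to\infty$ through integers; the leftover partial window is $O(1/M)$. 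On each window we write $\theta=\theta_0+s+2\pi k$ with $s\in[0,2\pi)$. Since $R_{\theta+2\pi k}=R_\theta$, the numerator becomes
\[
\int_0^{2\pi}ds\,\frac1M\sum_{k=0}^{M-1}\Xi_1\bigl(R_{\theta_0+s+2\pi k}X(\theta_0+s+2\pi k)\bigr),
\qquad \Xi_1(R_\varphi Y)=\frac{\Psi(R_\varphi Y)}{g(Y)}.
\]
The denominator is the same expression with $\Xi_2=1/g(Y)$. Both are observables of the type allowed in the hypothesis. The hypothesis says the inner sums converge to constants $\bar\Xi_1,\bar\Xi_2$ independent of $s$. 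These sums are bounded uniformly by $\sup|\Psi|/\alpha$, so dominated convergence in $s$ gives
\[
\langle\Psi\rangle=\frac{\bar\Xi_1}{\bar\Xi_2},
\]
with $\bar\Xi_2\ge 1/\sup g>0$ as long as $g$ is bounded on the compact shape space.

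Next comes rotation invariance. Replacing $x_0$ by $R_\delta x_0$ keeps the shape trajectory $X(\cdot)$ and shifts $\theta\mapsto\theta+\delta$, because the dynamics commutes with rotations. The time average of $\Psi$ along the rotated orbit therefore equals the time average of $\Psi\circ R_\delta$ along the original orbit. In the formula above, this replaces $\Xi_1$ by its version in which $R_{\theta_0+s+2\pi k}$ is shifted to $R_{\theta_0+s+\delta+2\pi k}$. Equivalently, the phase $s$ is shifted to $s+\delta$ in the rotation argument while the shape argument keeps $s$. I read the hypothesis as asserting independence of the stroboscopic limit from the rotation phase as well; this is how the statement writes $\Xi(\theta_0)$ with $\theta_0$ entering both $R$ and $X$. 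On that reading, $\bar\Xi_1$ is unchanged and $\langle\Psi\rangle(R_\delta x_0)=\langle\Psi\rangle(x_0)$.

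The main obstacle is this last step. If independence holds only for the joint phase $\theta_0$, then the shifted observable $\Xi^\delta(R_\varphi Y)=\Xi(R_{\varphi+\delta}Y)$ must be treated as a new admissible observable. That is fine, since the hypothesis is stated for any observable. Concretely, we apply it to $\Xi^\delta$ at phase $\theta_0+s$, and then to $\Xi$ itself. What remains is to check that the two limits coincide. I would argue this through the $2\pi$-periodicity in $\delta$ of the averaged quantity together with its $s$-independence: integrating over $s$ and using Fubini shows that $\delta\mapsto\langle\Psi\circ R_\delta\rangle$ is the average over $s$ of a function of $s+\delta$ alone, hence constant in $\delta$. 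That conclusion is exactly radiality.
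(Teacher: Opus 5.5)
Your reduction to $2\pi$-windows and to the ratio $\langle\Psi\rangle=\bar\Xi_1/\bar\Xi_2$ is the paper's route. You handle the normalisation $t(\Theta)$ more carefully than the paper's factor $2\pi N/t(2\pi N)$, by applying the hypothesis to $1/g$. The gap is the rotation-invariance step, which is the entire content of the theorem.

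Your first option, reading the hypothesis as independence from the rotation phase separately, is not a reading of what is stated. It postulates that the stroboscopic measure at a fixed phase is rotation invariant, which is the conclusion at the stroboscopic level.

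Your second option fails. Put $G(s,\varphi):=\lim_M\frac1M\sum_{k}\Xi_1\bigl(R_\varphi X(\theta_0+s+2\pi k)\bigr)$. The numerator of $\langle\Psi\circ R_\delta\rangle$ is $\int_0^{2\pi}G(s,\theta_0+s+\delta)\,ds$, and the denominator is unchanged. The integrand depends on $s$ also through the shape argument, so it is not a function of $s+\delta$ alone, and periodicity plus Fubini yields nothing. What the joint-phase hypothesis applied to $\Xi_1\circ R_\delta$ actually gives is that $s\mapsto G(s,\theta_0+s+\delta)$ is constant. So the rotated numerator is $2\pi G(0,\theta_0+\delta)$, and nothing relates this to $2\pi G(0,\theta_0)$. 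The joint hypothesis only says that the configuration measures $\lim_M\frac1M\sum_k\delta_{x(\theta_0+s+2\pi k)}$ coincide for all $s$. It is silent on whether this common measure is rotation invariant.

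The missing idea is that the phase-independent object must be the stroboscopic distribution of the \emph{shape} alone, $\nu=\lim_M\frac1M\sum_k\delta_{X(\theta_0+s+2\pi k)}$. In other words, the hypothesis is about observables of $X'=F(X)$, as in Problem \ref{ergodic-question}. This is what the paper's closing formula $\int_{\mathcal S}\bigl(\frac1{2\pi}\int_0^{2\pi}\Psi(R_\alpha X)\,d\alpha\bigr)\mu(dX)$ encodes.

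Since $R_{\theta_0+s+2\pi k}=R_{\theta_0+s}$ is frozen along each stroboscopic sequence, the inner limit is then $\int\Psi(R_{\theta_0+s}Y)\,g(Y)^{-1}\,\nu(dY)$, which does depend on $s$. It is the $s$-integral that performs the rotation average, not the fixed-$s$ sum to which the paper's wording attributes it. One gets $\langle\Psi\rangle=\int\nu(dY)\,g(Y)^{-1}\,\frac1{2\pi}\int_0^{2\pi}\Psi(R_\varphi Y)\,d\varphi\,\big/\int g^{-1}\,d\nu$. This is unchanged under $\Psi\mapsto\Psi\circ R_\delta$ by translation invariance of $d\varphi$ on the circle.

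In particular, your expectation that the inner sums for $\Xi_1$ have $s$-independent limits is the wrong one for observables that see the rotation. Read literally for all configuration observables, the joint-phase hypothesis cannot even hold when $X$ lies in a slice: an observable reading off the angle, such as $x\mapsto\cos\theta(x)$, has stroboscopic average $\cos\theta_0$.
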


This translates into other interesting questions for measure-preserving dynamics, for instance, for Hamiltonian systems.
\begin{question}\label{ergodic-question}
 
Let be $\left(\Lambda,\mu,\Phi_t\right)$ a measure preserving dynamical system defined on $\Lambda.$

Then, by Birkhoff theorem, for any smooth observable $\Phi$ there exists $\mu a.e.$ the time average
$$<\Psi> = \lim_{t\rightarrow\infty}\frac1T\int_0^T dt\, \Psi( \Phi_t(x))$$

Now, let us consider the system at discrete times $\tau_0+k \tau; k=0,1,....$

Again by Birkhoff theorem we know that $\mu a.e.$ it exists 
$$<\Psi>_{\tau_0}=\lim_{n\rightarrow\infty}\frac1n \sum_{k=0}^{n-1}\Psi(\Phi_{\tau_0+k\tau}(x))$$

Under which conditions on the dynamical system (or on the Hamiltonian for Hamiltonian systems), for fixed $\tau,$ the stroboscopic aveages does not depend $(\mu$ a.e.)  by $\tau_0?$

\end{question}

This condition seems very feasible for non degenerate system.
Nevertheless there are counterexamples, if, for instance, the dynamical system is an harmonic oscillator with period $\tau$ then  $\Phi_{\tau_0+k\tau}(x)\Phi_{\tau_0}(x)$ and the time average is simply $\Psi(\Phi_{\tau_0+k\tau}(x))$ that in general depends on $\tau_0$

This rare eventuality appears in an important case, namely when applying the Binet equation to the Kepler problem \cite{GPS}. The angle equation in this case is that of a harmonic oscillator. Indeed, all the limited orbits of the Kepler problem are closed, and the solution is not rotationally invariant.

The same if the dynamical system is equivalent to a system of independent harmonic oscillators with frequencies rational to $\tau.$

Anyway it seems difficult to find examples other than collections of harmonic oscillators and canonical transformation of them, and one might hope that instead of completely characterizing the motion, some non-degeneracy result of the spectrum of the solution might suffice to demonstrate the result.

Nevertheless we are not proving this result but we pose it because it seems interesting by itself.

In the next paragraph we prove the conjecture for 3 vortices following this path in some regions of the phase space.
Then we consider the case of $N\geq 4$ vortices showing at least that one can satisfy assumptions \ref{ass1}.

\section{Analysis of specific vortex systems}
We present here some cases where the strategy outlined in the preceding section is applicable.

\subsection{N=3}
Let us begin with the case of three vortices. 

We consider for sake of simplicity the case, see Section 2, in which the vortices moves above the circle $y=0$ (i.e. $y>0$.)

In this case the assumptions \ref{ass1} can be easily satisfied by choosing as shape variables $R=(X,Y,Z)$ as in section 2 and by choosing as angle $\frac{\theta_1+\theta_2+\theta_3}3.$ 

Indeed, in this case, in which the orbits do not cross $Y=0,$ the time derivative of this angle is always positive, see \eqref{Theta(R)}.

Then, being the orbit periodic a.e in the sphere, we can compute its "angular "period, i.e. considering the angle $\theta$ as the independent variable instead of the time $t,$  getting \eqref{dthetafinal}.

This angle, as discussed above, is a.e. non rational to $2\pi.$ Indeed it is a non constant analytic function, therefore it takes a.e. non rational to $2\pi$ values.

\subsection{N=4}
Four vortices in the plane is a non integrable system. in particular regions of chaotic motion do exists (Ziglin \cite{Z}).

Form the other side Khanin \cite{Kh} proved  the existence of  a region of positive Lebesgue measure in the phase space where the motion is quasi-periodic. This regime corresponds to a configuration of two hierarchical pairs, where $|z_1 - z_2| < \epsilon$ and $|z_3 - z_4| < \epsilon$, while the distance between the respective centers of vorticity is $O(1)$. In this setting, the system can be viewed as a perturbation of two independent binary systems. 

In these cases Assumptions {\ref{ass1} can be satisfied. 
Indeed we choose as our angular variable the orientation $\theta$ of the vector connecting the two barycenters relative to a fixed reference frame. The KAM non-degeneracy conditions are satisfied in this region; specifically, the disparity between the internal $O(\epsilon)$ separation and the $O(1)$ inter-pair distance ensures that the angular velocity $\dot{\theta}$ is strictly positive and bounded away from zero, precluding collisions and maintaining the persistence of invariant tori.

\begin{figure}[!h]
\centerline{
\includegraphics[scale=.5]{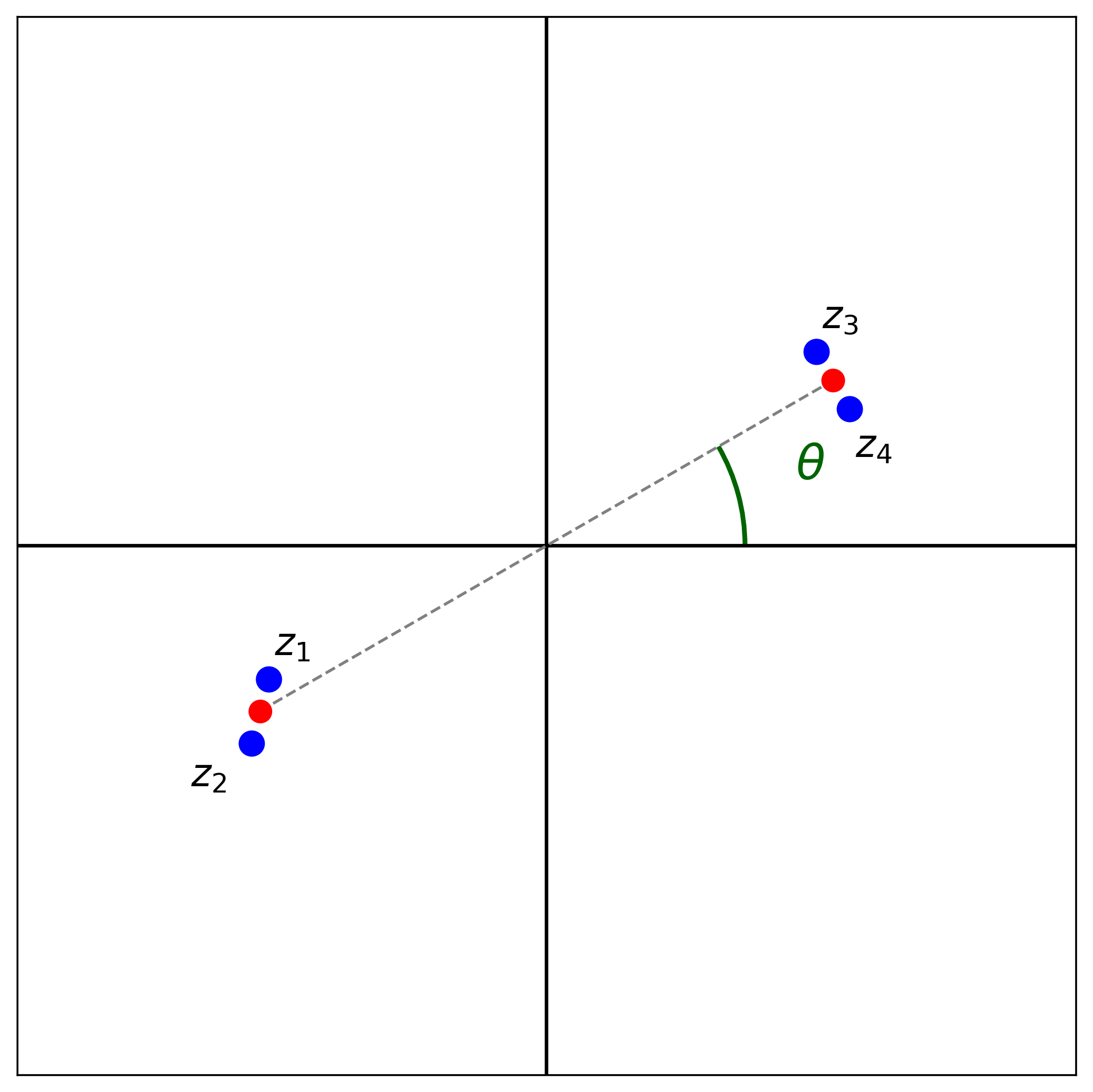}}
\caption{Four vortices}
\end{figure}

Then, it is easy to prove that the time derivative of $\theta$ is positive.

Indeed, let us  define the centers of vorticity  of the two clusters as $z_A \equiv(x_A,y_A)= \frac{1}{2}(z_1 + z_2)$ and $z_B \equiv (x_B,y_B) = \frac{1}{2}(z_3 + z_4)$. By exploiting the invariance of the system under rotation, we can assume, without loss of generality, a reference frame such that $z_A = (-L, 0)$ and $z_B = (L, 0)$ for some $L > 0$.

To prove that the inclination angle of the line connecting the two centroids is a monotonically increasing function of time, it is sufficient to show that $\dot{y}_B > 0$ and $\dot{y}_A < 0$. Due to the symmetry of the configuration, we restrict our analysis to the vertical velocity of the second cluster, $\dot{y}_B = \frac{1}{2}(\dot{y}_3 + \dot{y}_4)$. It is important to note that internal cluster contributions (i.e., the mutual interaction between $z_3$ and $z_4$) do not affect the motion of the centroid $z_B$, as they cancel out.

Consequently, a sufficient condition for $\dot{y}_B > 0$ is that the influence of vortices $z_1$ and $z_2$ on the velocities $\dot{y}_3$ and $\dot{y}_4$ remains positive. Recalling that the dynamics of a point vortex system are governed by 
\[
\dot{y}_i = \sum_{j \neq i} \frac{x_i - x_j}{|z_i - z_j|^2},
\]
for any pair of indices $i \in \{1,2\}$ and $j \in \{3,4\}$, the contribution to the velocity is positive provided that $x_j > x_i$. 

Assuming that each vortex remains within an $\epsilon$-neighborhood of its respective centroid, we have $x_i \in [-L-\epsilon, -L+\epsilon]$ and $x_j \in [L-\epsilon, L+\epsilon]$. The condition $x_j - x_i > 0$ is therefore guaranteed if:
\[
(L - \epsilon) - (-L + \epsilon) > 0 \implies 2L - 2\epsilon > 0,
\]
which holds for any $\epsilon < L$. Therefore, for $\epsilon$ sufficiently small, as it must be in the Khanin theorem, $\dot{\theta}>0.$

In the case of chaotic motion, as said in the introduction, we expect that decays of correlation should imply the result and also this is a problem to investigate.
\section*{Declarations}

\subsection*{Use of Generative AI in the Writing Process}
During the preparation of this work, the authors used {[e.g., ChatGPT and Gemini]} to improve the language and readability of the manuscript. After using this tool, the authors reviewed and edited the content as needed and take full responsibility for the framework and final output of the publication.

\newpage

\begin{appendices}
  \section{Proof of rotational invariance} 
\end{appendices}
Here we prove Theorem \ref{rotinv}.

\begin{proof}
Consider the definition of the continuous time average of the observable $\Phi$:
\begin{equation}
    <{\Psi}> = \lim_{T \to \infty} \frac{1}{T} \int_0^T \Psi(R_{\theta(t)} X(t)) \, dt =\lim_{\Theta\rightarrow\infty}\frac 1{t(\Theta)} \int_{{\theta_0}}^{\Theta} d\theta\, \frac{\Psi(R_{\theta}X(\theta))}{g(X(\theta))}
\end{equation}

Let us define
$$\Xi(X(\theta))=\frac{\Psi(R_{\theta}X(\theta))}{g(X(\theta))}.$$
We partition the domain of integration into segments of length $2\pi$. We can choose $\theta_0=$ and $\Theta=2\pi N$ getting:
\begin{equation}
    <{\Psi}> = \lim_{N \to \infty} \frac{1}{t(2\pi N)} \sum_{k=0}^{N-1} \int_{2\pi k}^{2\pi(k+1)} \Xi(R_{\theta} X(\theta)) \, d\theta=\lim_{N \to \infty} \frac{2\pi N}{t(2\pi N)} \frac{1}{2\pi N}\sum_{k=0}^{N-1} \int_{2\pi k}^{2\pi(k+1)} \Xi(R_{\theta} X(\theta)) \, d\theta
\end{equation}
Applying the change of variables $\theta = 2\pi k + s$, where $s \in [0, 2\pi)$, and utilizing the Dominated Convergence Theorem to commute the limit and summation with the integral, we obtain:
\begin{equation}
    \bar{\Psi} =\lim_{N \to \infty} \frac{2\pi N}{t(2\pi N)} \frac{1}{2\pi} \int_0^{2\pi} \left[ \lim_{N \to \infty} \frac{1}{N} \sum_{k=0}^{N-1} \Xi(R_{2\pi k + s} X(2\pi k + s)) \right] ds
\end{equation}

By the hypothesis of the theorem, the discrete sum converges to a value independent of $s$. In the dynamical context, for a fixed $s$, the sequence of angles $2\pi k + s$ samples the rotation group $SO(2)$.

Consequently, for some measure $\mu(dX)$ on the shape variables, the discrete sum converges to the angular average:
\begin{equation}
    \lim_{N \to \infty} \frac{1}{N} \sum_{k=0}^{N-1} \Phi(R_{2\pi k + s} X(2\pi k + s)) = \int_{\mathcal{S}} \left( \frac{1}{2\pi} \int_0^{2\pi} \Phi(R_\alpha X) \, d\alpha \right) \mu(dX)
\end{equation}
\end{proof}
and therefore the time average of $\Psi$ is radially symmetric.

\end{document}